\newcommand{\be}{\begin{equation}}
\newcommand{\ee}{\end{equation}}
\newcommand{\ba}{\begin{eqnarray}}
\newcommand{\ea}{\end{eqnarray}}
\newtheorem{thm}{Theorem}[section]
\newtheorem{prop}[thm]{Proposition}
\newcommand{\p}{\partial}
\newtheorem{rem}{Remark}
\begin{document}
\centerline{\bf  On the energy of a null cone}

\bigskip

-\centerline{J. Tafel}

\noindent
\centerline{Institute of Theoretical Physics, University of Warsaw,}
\centerline{Ho\.za 69, 00-681 Warsaw, Poland, email: tafel@fuw.edu.pl}

\bigskip

\begin{abstract}
 We derive a formula for the Bondi mass aspect in terms of asymptotic data of the Bondi-Sachs metric in the affine gauge. We prove the positivity of the total energy of a regular null cone in agreement with a recent result of Chru\'{s}ciel and Paetz.
\end{abstract}

\bigskip

\noindent
 Keywords:
the Bondi mass aspect, the Trautman-Bondi mass,  null cone

\null

\noindent
PACS numbers: 04.20.Ha

\null

\section{Introduction}
 Schoen and Yau \cite{sy,sy2} proved that the ADM energy in general relativity  is positive under reasonable assumptions about the energy-momentum tensor and metric tensor. The proof was simplified by   Witten \cite{w} and his work was amended  by Nester \cite{n}. Using Witten's method a similar result was obtained  for the Trautman-Bondi energy $E$ of a null surface \cite{lv,sy1,hp,in}.
Recently Chru\'{s}ciel and Paetz \cite{cp} presented  a  proof of the positivity of the Trautman-Bondi energy  of a regular null cone in an asymptotically flat spacetime. This property was  suggested earlier by the present author and Korbicz in a framework of the Hamiltonian formalism on a cone \cite{kt}. However, paper \cite{kt} contains a condition which is too restrictive if the cone is to be regular at its vertex. Here  an amended version  of our approach in \cite{kt} is presented.  In many respects it coincides with that in \cite{cp}, but some proofs and formulas are simpler.

We  consider metric in the Bondi coordinates adapted to a foliation of spacetime by null cones, but instead of the luminosity distance we  use the affine parameter  along null generators of these cones. In section 2 we derive a formula for the Bondi mass aspect in terms of asymptotic data of the metric. In section 3 we express the total energy of a regular cone as an integral of quantities which are either non-negative or become such if the energy dominant condition is satisfied.
 
\section{The Bondi mass aspect in the affine gauge}
In the Bondi-Sachs approach to gravitational radiation \cite{bbm,s} spacetime is asymptotically flat in future null directions and it can be foliated  by null surfaces $u=const$ having the structure of the  Minkowskian light  cone, at least in the asymptotic region. In adapted coordinates $x^{\mu}=u,r,x^A$, where $\mu=0,1,2,3$ and $A=2,3$, metric  takes the form 
\begin{equation}
g=du(g_{00}du+2g_{01}dr+2\omega_{A} dx^A)+g_{AB}dx^Adx^B\ .\label{3}
\end{equation}
Coordinates $u$ and $r$ are interpreted, respectively,  as a retarded time and a distance from a center  and $x^A$ are coordinates on the 2-dimensional sphere $S_2$. We assume that $g$ has the  signature $+---$, so metric $g_{AB}$ is negative definite.
Note that  $g$ corresponds to $\tilde g$ in \cite{kt}.

Within this approach  the following expansions  are assumed for large values of  coordinate $r$ 
\begin{equation}
g_{00}=1-2Mr^{-1}+O(r^{-2})\ ,\label{4}
\end{equation}
\begin{equation}
g_{01}=1+O(r^{-2})\ ,\label{4a}
\end{equation}
\begin{equation}
\omega_{A}=\psi_A+\kappa _A r^{-1}+O(r^{-2})\ ,\label{6}
\end{equation}
\begin{equation}
 g_{AB}=-s_{AB}r^2+ n_{AB}r+m_{AB}+O(r^{-1})\ .\label{7}
\end{equation}
Here $s_{AB}$ is the standard metric of $S_2$ and all coefficients are functions of $u$ and $x^A$. A lack of a term proportional to  $r^{-1}$ in (\ref{4a}) follows from minimal assumptions about the Ricci tensor which assure that the total energy-momentum vector is well defined \cite{t}.

Coordinate $r$ is not yet fully defined. In the original Bondi coordinates it is the luminosity distance  $r'$   which satisfies 
\be\label{1}
r'=(\frac{\sigma}{\sigma_s})^{\frac 12}\ ,
\ee
where $\sigma^2=\det{g_{AB}}$ and $\sigma^2_s=\det{s_{AB}}$. Let quantities related to the Bondi coordinates be denoted by prime. Function  $M'$ is called the Bondi mass aspect and the total energy of the cone is given by 
\be
E(u)=\frac{1}{4\pi}\int_{S_2} M' \sigma_sd^2x\ .\label{10a}
\ee
In these coordinates $s^{AB}n'_{AB}=0$ and $n'_{AB,u}$ is equivalent to the Bondi news function (this is why we use letter  $n$ in the symbol $n_{AB}$).

Unfortunately, the luminosity distance is not convenient for proving $E\geq 0$. It is better to assume that  
\be
 g_{01}=1\ .\label{12}
\ee
 In this case  $r$ is the affine parameter along null geodesics  tangent to  $u^{,\alpha}\partial_{\alpha}$. A transformation between the luminosity distance $r'$ and the affine distance $r$ is defined  up to a function $a'(u,x^A)$
\be\label{12a}
r=r'+\int_{r'}^{\infty}(1-g_{01})dr''+a'(u,x^A)\ .
\ee
If $a'\neq 0$ then instead of (\ref{4}) one obtains 
\begin{equation}
g_{00}=a-2Mr^{-1}+O(r^{-2})\ ,\label{13}
\end{equation}
where $a$ is a  function of $u$ and $x^A$. A relation between $M$ and the Bondi mass aspect $M'$ is described by the following proposition in which $\p_u$ is denoted by a dot.
\begin{thm} 
Metric
\begin{equation}
g=du(g_{00}du+2dr+2\omega_{A} dx^A)+g_{AB}dx^Adx^B\label{13a}
\end{equation}
is equivalent to the Bondi-Sachs metric with the luminosity distance if it satisfies  conditions   (\ref{6}), (\ref{7}) and 
\begin{equation}
g_{00}=1+\frac 12\dot n^A_{\ A}-\frac{2M}{r}+O(r^{-2})\ .\label{18a}
\end{equation}
The Bondi mass aspect is given by
\be
M'=M+\frac{1}{4}\dot m^A_{\ A}-\frac 14n_{AB} \dot n^{AB}+\frac {1}{16}n^A_{\ A}\dot n^{\ B}_B\ .\label{18g}
\ee
 If the Einstein equation $R_{11}=T_{11}$ is satisfied then
\be
 m^A_{\ A}=\frac 14n_{AB} n^{AB}-\lim_{r\rightarrow\infty}{r^4 T_{11}}\ .\label{18f}
\ee
\end{thm}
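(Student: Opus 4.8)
All three formulas follow from comparing the affine-gauge metric (\ref{13a}) with a Bondi--Sachs metric in the luminosity gauge, supplemented by one computation of the $rr$-component of the Ricci tensor. The plan is to introduce the change of radial coordinate relating the affine parameter $r$ to the luminosity distance $r'=(\sigma/\sigma_s)^{1/2}$ of (\ref{1}) at fixed $u,x^A$, to match the expansions (\ref{6})--(\ref{7}) order by order, and to use the field equation only at the very end.

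First I would fix the leading part of the transformation from the determinant. Expanding $\det g_{AB}$ from (\ref{7}) gives $\sigma=r^2\sigma_s(1-\tfrac1{2r}n^A_{\ A}+O(r^{-2}))$, so (\ref{1}) yields
\[
r=r'+\tfrac14 n^A_{\ A}+\frac{b(u,x^A)}{r'}+O(r'^{-2}).
\]
Since $n^A_{\ A}$ depends on $u$, differentiating this relation puts a term $\tfrac14\dot{n}^A_{\ A}\,du$ into $dr$, which feeds the $du^2$ part of (\ref{13a}) through the product $2\,du\,dr$. Requiring the transformed $g'_{00}$ to have leading coefficient $1$, as demanded by (\ref{4}), then fixes the constant $a$ of (\ref{13}) and produces (\ref{18a}); moreover $g'_{01}=\p r/\p r'$ carries no $r'^{-1}$ term, so the transformed metric automatically obeys (\ref{4a}) and is genuinely of Bondi--Sachs type. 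This settles the equivalence claim.

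To reach (\ref{18g}) I would push the same matching one order further. The subleading coefficient $b$ is determined by the determinant condition at order $r'^{-2}$ and is a combination of $m^A_{\ A}$ with terms quadratic in $n_{AB}$. The coefficient of $r'^{-1}$ in the transformed $g'_{00}$ equals $-2M'$, and it receives only two contributions: the value $-2M$ carried over from $g_{00}$, and the term $2\dot{b}$ produced by $\p_u$ acting on the shift $b/r'$ through $2\,du\,dr$, so that $-2M'=-2M+2\dot b$. Since $b$ combines $m^A_{\ A}$ with the news-squared combinations, its $u$-derivative supplies $\dot{m}^A_{\ A}$ together with $n_{AB}\dot{n}^{AB}$ and $n^A_{\ A}\dot{n}^{\ B}_B$, and the $2\times2$ trace identities reduce this to (\ref{18g}). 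I expect this step to be the main obstacle: determining $b$ to the required order, and keeping the numerical coefficients straight through the $u$-differentiation of the news-squared terms, is where the bookkeeping is heaviest, because the retarded-time dependence of the radial coordinate couples several orders at once.

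Finally, for (\ref{18f}) I would evaluate $R_{11}=R_{rr}$ directly. Since $g_{11}=0$ and $g_{01}=1$ is constant one has $\Gamma^{\mu}_{\ 11}=0$, so the curves of constant $u,x^A$ are affinely parametrised null geodesics with tangent $\p_r$, and $\sqrt{-g}=\sigma$; thus $R_{11}$ is controlled by the focusing equation for the expansion $\theta=\p_r\ln\sigma$ and the shear of this congruence. Inserting (\ref{7}) gives
\[
\theta=\frac2r+\frac{n^A_{\ A}}{2r^2}+O(r^{-3}),
\]
with the $r^{-3}$ coefficient a combination of $m^A_{\ A}$ and $n_{AB}n^{AB}$ and with a shear whose square is $O(r^{-4})$. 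In forming $R_{11}$ the $r^{-2}$ and $r^{-3}$ parts cancel, and the $(n^A_{\ A})^2$ pieces coming from $\theta^2$ and from the shear cancel against each other, leaving an $O(r^{-4})$ coefficient built solely from $m^A_{\ A}$ and $n_{AB}n^{AB}$. Imposing $R_{11}=T_{11}$ and taking $\lim_{r\to\infty}r^4(\cdot)$ then produces (\ref{18f}).
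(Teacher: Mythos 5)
Your proposal follows essentially the same route as the paper's proof: the paper likewise determines the radial reparametrization order by order from $r'=(\sigma/\sigma_s)^{1/2}$, reads off $g'_{00}=g_{00}-2r'_{,u}/r'_{,r}$ by matching the $2\,du\,dr$ term (your ``$\p_u$ acting on the shift''), obtains $M'=M+\frac14\dot{\tilde m}^A_{\ A}$ with $\tilde m^A_{\ A}$ in (\ref{18d}) playing the role of your $b$, and derives (\ref{18f}) by expanding exactly the Raychaudhuri-type form (\ref{19}) of $R_{11}=T_{11}$. The one caveat is a sign convention: your determinant expansion $\sigma=\sigma_s r^2\bigl(1-\frac{1}{2r}n^A_{\ A}+\dots\bigr)$ is opposite to the paper's (\ref{18}), which traces back to how indices on $n_{AB}$ are raised, so the intermediate signs must be kept consistent with (\ref{18a}) and (\ref{18g}) when the bookkeeping is carried out.
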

\begin{proof}
Given metric (\ref{13a}) function $g'_{00}$ and  the Bondi mass aspect $M'$ can be determined  from the equality
\begin{equation}
g'_{00}du+2g'_{01}dr'+2\omega'_{A} dx^A=g_{00}du+2dr+2\omega_{A} dx^A\ ,\label{14}
\end{equation}
where primes correspond to the luminosity gauge.
It follows from (\ref{14}) that
\be
g_{01}'=\frac{1}{r'_{,r}}\ ,\ \ \ g_{00}'=g_{00}-2\frac{r'_{,u}}{r'_{,r}}\ .\label{16}
\ee
 In virtue of  (\ref{1})  equation (\ref{16}) yields
\be
g_{00}'=g_{00}-2\frac{\sigma_{,u}}{\sigma_{,r}}\ .\label{17}
\ee
From (\ref{7}) and (\ref{1}) one obtains
\be
\sigma\approx \sigma_sr^2\big(1+\frac {1}{2r}n^A_{\ A})\ ,\ \ r'\approx r+\frac {1}{4}n^A_{\ A}\label{18}
\ee
for large values of $r$. One consequence of   (\ref{17}), (\ref{18})  and condition (\ref{4}) for $g'_{00}$ is equation (\ref{18a}).  Another one  is that  the Bondi mass aspect is given by
\be
M'=\frac 12\lim_{r\rightarrow\infty}{r'(1-g_{00}+2\frac{\sigma_{,u}}{\sigma_{,r}})}\label{18i}
\ee
or, equivalently, by
\be
M'=\frac {1}{4\sigma_s}\lim_{r\rightarrow\infty}{(\sigma_{,r}(1-g_{00})+2\sigma_{,u})}=\frac {1}{4\sigma_s}\lim_{r\rightarrow\infty}{f}+\frac 18n^A_{\ A}\ ,\label{18b}
\ee
where
\be
f=2\sigma_sr-g_{00}\sigma_{,r}+2\sigma_{,u}\ .\label{18j}
\ee

Expression (\ref{18b}) will be useful in a proof of $E\geq 0$ in the next section. In order to obtain (\ref{18g}) we calculate one more term in  expansions   (\ref{18})
\be
\sigma\approx \sigma_s(r^2+\frac 12rn^A_{\ A}+\frac 12\tilde m^A_{\ A}+\frac {1}{16}(n^A_{\ A})^2) ,\ \ r'\approx r+\frac {1}{4}n^A_{\ A}+\frac{1}{4r}\tilde m^A_{\ A}\ ,\label{18c}
\ee
where
\be
\tilde m^A_{\ A}=m^A_{\ A}-\frac 12n_{AB}n^{AB}+\frac 18(n^A_{\ A})^2\ .\label{18d}
\ee
It follows from (\ref{18b})-(\ref{18d}) that 
\be
M'=M+\frac{1}{4}\dot{\tilde m}^A_{\ A}\ ,\label{18e}
\ee
hence (\ref{18g}) is obtained. 
One can further transform this expression if the Einstein equation $R_{11}=T_{11}$ (see (\ref{19}))  is satisfied. Expanding both sides of this equation into powers of $r^{-1}$ yields  (\ref{18f}) as the first nontrivial equality. 
 
\end{proof}
\begin{rem}
 In order to eliminate $m^A_{\ A}$ from (\ref{18g}) one can use (\ref{18f}) or the equality
\be
 \dot m^A_{\ A}=\frac 12n_{AB} \dot n^{AB}+\lim_{r\rightarrow\infty}{r^3(2T_{01}}-T)
\ee
which follows from the Einstein equation $R_{01}=T_{01}-T/2$. 
\end{rem}
Formulas given in this section agree with those in \cite{kt} if $T^{\mu\nu}=0$ and $n^A_{\ A}=0$. The latter condition can be obtained by a shift of the affine distance. However,   assumption $n^A_{\ A}=0$ is too restrictive (as pointed out by P. Chru\'sciel) if   a regular foliation of the complete cone $u=const$ by surfaces $r=const$ is considered in order to prove $E\geq 0$. 
\section{The total energy of a light cone}
Let us assume that metric satisfies conditions (\ref{3})-(\ref{7}) and $u=const$ is a complete  future cone with a vertex at $r=0$. For our purposes the following regularity conditions at the vertex are important
\be\label{2}
\lim_{r\rightarrow 0}{r^{-2}\sigma}\neq 0\ ,\ \ \sigma_{,u}=\sigma_{,r}=0\ ,\ \ |g_{00}-\omega_A\omega^A|<\infty\ \ at\ \ r=0\ .
\ee
These conditions are  preserved if we pass to the luminosity distance $r'$ or to the affine distance  provided that the latter  vanishes at the vertex.

We are going to prove the following proposition which, to big extent, overlaps with equation (31) in \cite{cp} and its consequence saying that $E\geq 0$ ($m_{TB}\geq 0$ in the notation of \cite{cp}) if the dominant energy condition is satisfied. The proof of this proposition is  based on equations and method presented by the author and  Korbicz in \cite{kt}. 

\begin{prop}
 Let metric (\ref{13a}) satisfies conditions (\ref{6}), (\ref{7}), (\ref{18a}), (\ref{2})  and the Einstein equations $R_{11}=T_{11}$ and $g^{AB}R_{AB}=g^{AB}T_{AB}-T$. Then
\ba\label{25c}
16\pi E=-\int_{0}^{\infty}dr\int_{S_2}{(\frac{1}{4}r^2\sigma_s\hat g_{AB,r}\hat g^{AB}_{\ \ ,r}+\frac{1}{2}\sigma g_{AB}\omega^A_{\ ,r}\omega^B_{\ ,r}) d^2x}\\\nonumber
+\int_{0}^{\infty}dr\int_{S_2}(r^2\sigma_sT_{11}+\sigma (T-g^{AB}T_{AB}))d^2x\ ,
\ea
where $\hat g_{AB}=r^{-2} g_{AB}$. If the dominant energy condition is satisfied then $E\geq 0$.
\end{prop}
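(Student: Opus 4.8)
The plan is to convert the boundary formula (\ref{18b}) for the Bondi mass aspect into a purely radial integral and, after invoking the two Einstein hypersurface equations, to recognise the integrand as a sum of manifestly signed geometric terms plus a matter term controlled by the dominant energy condition. I would start from the energy formula (\ref{10a}) and insert (\ref{18b}), so that $16\pi E=\int_{S_2}\big(\lim_{r\to\infty}f+\tfrac12\sigma_s n^A_{\ A}\big)d^2x$ with $f$ as in (\ref{18j}). The decisive move is to replace the value at infinity by a radial integral, $\lim_{r\to\infty}f=f|_{r=0}+\int_0^\infty f_{,r}\,dr$, and to verify that the vertex term $f|_{r=0}$ vanishes: by (\ref{2}) the conditions $\sigma_{,r}=\sigma_{,u}=0$ at $r=0$ kill two of the three pieces of $f$, while the boundedness of $g_{00}-\omega_A\omega^A=-g^{11}$, together with the regular behaviour of the cone at its tip, controls the product $g_{00}\sigma_{,r}$. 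This reduces $16\pi E$ to $\int_{S_2}\int_0^\infty f_{,r}\,dr\,d^2x$ up to the residual trace term $\tfrac12\int_{S_2}\sigma_s n^A_{\ A}\,d^2x$.

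Next I would differentiate, $f_{,r}=2\sigma_s-g_{00}\sigma_{,rr}-g_{00,r}\sigma_{,r}+2\sigma_{,ur}$, and feed in the field equations. The equation $R_{11}=T_{11}$ is the Raychaudhuri equation for the affinely parametrised null generators $\partial_r$ (geodesic and twist-free, since the corresponding one-form is $du$); it expresses $\sigma_{,rr}$ through $\sigma_{,r}^2/\sigma$, the quadratic scalar $\tfrac14 g_{AB,r}g^{AB}_{\ \ ,r}$ and $T_{11}$. Passing to $\hat g_{AB}=r^{-2}g_{AB}$, whose determinant tends to $\sigma_s^2$, and using $\hat g^{AB}_{\ \ ,r}=-\hat g^{AC}\hat g^{BD}\hat g_{CD,r}$, the shear combination $\tfrac14 r^2\sigma_s\hat g_{AB,r}\hat g^{AB}_{\ \ ,r}$ emerges once the accompanying trace and lower order in $r$ pieces are cleared against $2\sigma_s$ and $\sigma_{,r}^2/\sigma$. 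The second equation $g^{AB}R_{AB}=g^{AB}T_{AB}-T$ is the hypersurface equation fixing the radial dependence of $g_{00}$; it is used to trade the $g_{00}$-weighted combinations for the $r^2\sigma_s$-weighted ones of (\ref{25c}), generating the shift term $\tfrac12\sigma g_{AB}\omega^A_{\ ,r}\omega^B_{\ ,r}$ and completing the matter combination $\sigma(T-g^{AB}T_{AB})$, modulo total $r$-derivatives. This middle step is where the main obstacle lies: the explicit evaluation of $R_{11}$ and $g^{AB}R_{AB}$ for the metric (\ref{13a}) with $\omega_A\neq 0$, and the bookkeeping that collapses $f_{,r}$ into exactly the two signed geometric terms plus the matter combination. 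The $\omega_A$-sector is the delicate part, because the coupling of the radial and angular equations is what produces $\tfrac12\sigma g_{AB}\omega^A_{\ ,r}\omega^B_{\ ,r}$; identifying which contributions are total $r$-derivatives, and checking that their boundary values exactly cancel the residual term $\tfrac12\sigma_s n^A_{\ A}$ while vanishing at the regular vertex, is where errors are easiest to make.

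Finally I would prove positivity term by term. Because $g_{AB}$ is negative definite, the identity $\hat g^{AB}_{\ \ ,r}=-\hat g^{AC}\hat g^{BD}\hat g_{CD,r}$ gives $\hat g_{AB,r}\hat g^{AB}_{\ \ ,r}\le 0$, and likewise $g_{AB}\omega^A_{\ ,r}\omega^B_{\ ,r}\le 0$, so both geometric contributions to the right-hand side of (\ref{25c}) are non-negative. For the matter part, $r^2\sigma_s T_{11}=r^2\sigma_s T_{\mu\nu}\ell^\mu\ell^\nu\ge 0$ since $\ell=\partial_r$ is null and the dominant energy condition implies the null energy condition, whereas a short computation with the inverse metric shows $T-g^{AB}T_{AB}=2T_{\mu\nu}\ell^\mu n^\nu$, the contraction of $T_{\mu\nu}$ over the Lorentzian two-plane normal to the spheres spanned by the two future null normals $\ell,n$ (cross-normalised, $g(\ell,n)=1$), which is again non-negative under the dominant energy condition. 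Hence every term in (\ref{25c}) is non-negative and $E\ge 0$.
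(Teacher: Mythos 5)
Your overall strategy is the paper's: write $16\pi E=\int_{S_2}(\lim_{r\to\infty}f+\tfrac12\sigma_s n^A_{\ A})\,d^2x$, turn the boundary value at infinity into a radial integral whose inner boundary term dies at the regular vertex, identify the integrand through the two hypersurface equations, and conclude term by term. Your positivity endgame is correct, and your identification $T-g^{AB}T_{AB}=2T_{\mu\nu}\ell^\mu n^\nu$ with $\ell=\partial_r$ and $n$ the second, cross-normalised null normal to the spheres is actually cleaner than the paper's, which introduces $v=2n$ and asserts it is timelike for $\omega_A\neq 0$ even though a direct computation gives $g(v,v)=0$; the dominant-energy argument goes through for null $v$ in any case.

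The gap is in the middle step, which you explicitly defer (``this is where the main obstacle lies'') and whose sketch does not match what actually has to happen. First, the term $2\sigma_s$ in $f_{,r}$ has no counterpart in either field equation pointwise; it is absorbed only after integrating over $S_2$, via the Gauss--Bonnet theorem $\int_{S_2}R^{(2)}\sigma\,d^2x=-8\pi=-\int_{S_2}2(\sigma_s r)_{,r}\,d^2x$ applied to the $R^{(2)}$ term of the angular-trace equation (\ref{20}); you never invoke Gauss--Bonnet, and without it the derivation cannot close. Second, the division of labour between the two equations is inverted in your sketch: the $S_2$-integral of $\sigma$ times (\ref{20}) is precisely $\int_{S_2}\tilde f_{,r}\,d^2x=\int_{S_2}P\sigma\, d^2x$ with $\tilde f=2\sigma_s r+g^{11}\sigma_{,r}+2\sigma_{,u}$ (differing from your $f$ by $\omega^A\omega_A\sigma_{,r}$, which is harmless since it vanishes at both ends), and this supplies the $\tfrac12\sigma g_{AB}\omega^A_{\ ,r}\omega^B_{\ ,r}$ and $\sigma(T-g^{AB}T_{AB})$ terms of (\ref{25c}); the residual $\tfrac12\int\sigma_s n^A_{\ A}$ is \emph{not} cancelled by boundary values of total derivatives coming from $f_{,r}$, but is converted into the remaining bulk terms by a separate integration of the Raychaudhuri equation in the rescaled form $(r^2(\ln\hat\sigma)_{,r})_{,r}=\tfrac14 r^2\hat g_{AB,r}\hat g^{AB}_{\ \ ,r}-r^2T_{11}$ from $0$ to $\infty$, using $\lim_{r\to\infty}r^2(\ln\hat\sigma)_{,r}=-\tfrac12 n^A_{\ A}$ and the vanishing of this quantity at the vertex. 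Until these two integrations are carried out explicitly, identity (\ref{25c}) --- the actual content of the proposition --- is not established.
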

\begin{proof}
The Einstein equations mentioned above have the following form
\be
\frac{1}{4}g_{AB,r}g^{AB}_{\ \ ,r}-(\ln \sigma)_{,rr}=T_{11}\ ,\label{19}
\ee
\be
R^{(2)}+\sigma^{-1}(2\sigma\omega^A_{\ |A}-g^{rr}\sigma,_r-2\sigma,_u),_r-(\omega^A_{\ ,r})_{|A}-\frac{1}{2}g_{AB}\omega^A_{\ ,r}\omega^B_{\ ,r}=g^{AB}T_{AB}-T.\label{20}
\ee
Here ${}_{|A}$ is the covariant derivative with respect to   $g_{AB}$, 
$R^{(2)}$ is the Ricci scalar  of this metric, $T^{\mu\nu}$ is  the energy-momentum tensor and $T=T^{\mu}_{\ \mu}$. Equations (\ref{19}) and (\ref{20}) reduce to equations (29) and (32)  in \cite{kt} if $T^{\mu\nu}=0$.

Let us integrate equation (\ref{20}) over a part of the cone $u=const$ between $r_1$ and $r_2$ with the measure $\sigma dx^2dx^3dr$. Integrals of divergences of $\omega^A$ and $\omega^A_{\ ,r}$ vanish. The Gauss-Bonet theorem for the negative definite metric $g_{AB}$ yields 
\be
\int_{S_2}{R^{(2)}\sigma d^2x}=-8\pi\label{21}
\ee
and it allows to  replace $R^{(2)}\sigma$  by $-2(\sigma_sr)_{,r}$. Hence, 
\be\label{22}
\int_{S_2}(\tilde f(r_2)-\tilde f(r_1))d^2x=\int_{r_1}^{r_2}dr\int_{S_2}P\sigma d^2x\ ,
\ee
where
\be\label{24}
\tilde f=2\sigma_sr+g^{11}\sigma,_r+2\sigma,_u
\ee
and
\be\label{23}
P=T-g^{AB}T_{AB}-\frac{1}{2}g_{AB}\omega^A_{\ ,r}\omega^B_{\ ,r}
\ee
(definition (\ref{24}) coincides with (50) in \cite{kt}). Function  $\tilde f$ vanishes at $r=0$ in virtue of conditions (\ref{2}). Since $g^{11}=\omega^A\omega_A-g_{00}$, $\tilde f$ can be approximated by function (\ref{18j}) for large values of $r$. In virtue of (\ref{18b})  equation (\ref{22}) in the limit $r_1\rightarrow 0$ and $r_2\rightarrow\infty$ yields  the following expression for the total energy 
\be\label{25}
16\pi E=\frac 12\int_{S_2}{n^A_{\ A}\sigma_sd^2x}+\int_{0}^{\infty}dr\int_{S_2}P\sigma d^2x\ .
\ee
Equation (\ref{25})  agrees with (48) in \cite{kt} if $n^A_{\ A}=T_{\mu\nu}=0$.

 In order to find sufficient  conditions  which assure that the first term on the rhs of (\ref{25}) is non-negative let us write equation (\ref{19}) in the following way 
\be
\frac{1}{4}r^2\hat g_{AB,r}\hat g^{AB}_{\ \ ,r}-(r^2(\ln \hat\sigma)_{,r})_{,r}=r^2T_{11}\ ,\label{26}
\ee
where $g_{AB}=r^2\hat g_{AB}$ and $\sigma=r^2\hat \sigma$. From (\ref{18}) and (\ref{2}) one obtains
\be\label{27}
\lim_{r\rightarrow\infty}r^2(\ln \hat\sigma)_{,r}=-\frac 12n^A_{\ A}\ ,\ \ \ \lim_{r\rightarrow 0}r^2(\ln \hat\sigma)_{,r}=0\ .
\ee
Hence, integrating (\ref{26}) over $r$ between $0$ and $\infty$ yields
\be
n^A_{\ A}=2\int_0^{\infty}{r^2(T_{11}-\frac{1}{4}\hat g_{AB,r}\hat g^{AB}_{\ \ ,r}})dr\ .\label{28}
\ee
Substituting (\ref{23}) and (\ref{28}) into (\ref{25}) yields (\ref{25c}).
Since
\be
\hat g_{AB,r}\hat g^{AB}_{\ \ ,r}=-\hat g_{AB,r}\hat g^{AC}\hat g^{BD}\hat g_{CD,r}\leq 0
\ee
condition  $T_{11}\geq 0$  is sufficient to obtain $n^A_{\ A}\geq 0$. 

 Let us consider the integral of $P$ on the rhs of (\ref{25}).  Since $g_{AB}\omega^A_{\ ,r}\omega^B_{\ ,r}\leq 0$, in order  to obtain $P\geq 0$ it is sufficient to assure positivity of  the expression
\be
T-g^{AB}T_{AB}=2T_{10}+g^{11}T_{11}-2\omega^AT_{1A}=T^0_{\ \beta}v^{\beta}=g(\p_r,T^{\alpha}_{\ \beta}v^{\beta}\p_{\alpha})\ ,
\ee
where
\be
 v=2\p_u+(\omega^B\omega_B-g_{00})\p_r-2\omega^A\p_A\ ,\ \ v_{\alpha}v^{\alpha}=-4\omega_A\omega^A\geq 0\ .
\ee
Let $\omega_A\omega^A\neq 0$. Then $v$ is timelike and future directed since $\p_r$ is null future directed and $g(v,\p_r)>0$. If $T^{\alpha\beta}$ satisfies the dominant energy condition then vector $v'=T^{\alpha}_{\ \beta}v^{\beta}\p_{\alpha}$ is non-spacelike and $g(v',v)\geq 0$.  Hence, $v'$ is also future directed and $T^0_{\ \beta}v^{\beta}=g(v',\p_r)\geq 0$. If $\omega_A\omega^A= 0$ then the same result follows from the continuity. Thus, $T-g^{AB}T_{AB}\geq 0$ and $P\geq 0$. 

If the energy dominant condition is satisfied then also $T_{11}\geq 0$  since $T_{11}=T_{\alpha\beta}k^{\alpha}k^{\beta}$, where $k=\p_r$.  Hence, $n^A_{\ A}\geq 0$ and the total energy $E$ is non-negative. 

\end{proof}

\section{Concluding remarks}

Results of Korbicz and Tafel \cite{kt} on  energy of a null surface in the Bondi-Sachs formalism  have been completed.
We presented an expression for the Bondi mass aspect in terms of asymptotical data corresponding to   metric in  the affine gauge (proposition 2.1). Using only two of the Einstein equations we confirmed a result of Chru\'{s}ciel and Paetz \cite{cp} on the positivity of the Trautman-Bondi energy of a regular null cone (proposition 3.1). Our proof is shorter and expression (\ref{25c}) for the total energy  is  simpler than that in \cite{cp}.

\bigskip
\noindent {\bf Acknowledgements}. I am grateful to Piotr Chru\'sciel for fruitful discussions.

\end{document}